\definecolor{bleu_sombre}{rgb}{0,0,0.6}  \definecolor{rouge_sombre}{rgb}{0.8,0,0}\definecolor{vert_sombre}{rgb}{0,0.6,0}
\theoremstyle{plain}
\newtheorem{theorem}{{Theorem}}[section] 
\newtheorem*{theorem*}{{Theorem}}
\newtheorem{proposition}[theorem]{Proposition}
\newtheorem*{proposition*}{Proposition}
\newtheorem{corollary}[theorem]{Corollary}
\newtheorem*{corollary*}{Corollary}
\newtheorem*{lemma*}{Lemma}
\theoremstyle{definition}
\newtheorem{definition}[theorem]{Assumption}
\newtheorem*{definition*}{Assumption}
\theoremstyle{remark}
\newtheorem{remark}[theorem]{Remark}
\theoremstyle{remark}
\newcommand{\ab}[1]{\left| #1 \right|}
\newcommand{\dd}{\mathrm{d}}
\newcommand{\R}{\mathbb{R}}
\newcommand{\et}{{\boldsymbol{\tau}}}
\newcommand{\eg}{{\boldsymbol{\gamma}}}
\newcommand{\en}{{\boldsymbol{n}}}
\title[]{On the two-dimensional  quantum confined Stark effect  in  strong electric fields}
\author{H. D. Cornean}
\address[H. D. Cornean]{Department of Mathematical Sciences, Aalborg University, Skjernvej 4A, 9220 Aalborg, Denmark}
\email{cornean@math.aau.dk}
\author{D. Krej{\v{c}}i{\v{r}}{\'{\i}}k}
\address[D. Krej{\v{c}}i{\v{r}}{\'{\i}}k]
{Department of Mathematics, Faculty of Nuclear Sciences and
	Physical Engineering, Czech Technical University in Prague,
	Trojanova 13, 12000 Prague 2, Czechia}
\email{david.krejcirik@fjfi.cvut.cz}
\author{T. G. Pedersen}
\address[T. G. Pedersen]{Department of Materials and Production, Aalborg University, Skjernvej 4A, 9220 Aalborg, Denmark}
\email{tgp@mp.aau.dk}
\author{N. Raymond}
\address[N. Raymond]{Laboratoire Angevin de Recherche en Mathématiques, LAREMA, UMR 6093, UNIV Angers, SFR Math-STIC, 2 boulevard Lavoisier 49045 Angers Cedex 01, France}
\email{nicolas.raymond@univ-angers.fr}
\author[E. Stockmeyer]{E. Stockmeyer}
\address[E. Stockmeyer]{Instituto de F\'isica, Pontificia Universidad Cat\'olica de Chile, Vicu\~na Mackenna 4860, Santiago 7820436, Chile.}
\email{stock@fis.puc.cl}
\begin{document}

	\begin{abstract}
 We consider a 
	Stark Hamiltonian on a two-dimensional bounded domain with Dirichlet boundary conditions. In the strong
	electric field limit  we derive, under certain local convexity conditions, a three-term asymptotic expansion of the
	low-lying eigenvalues. This shows that the excitation frequencies are proportional to the square root of the boundary curvature at a certain point determined by the direction of the electric field.
\end{abstract}
\maketitle	
\section{Introduction}

\subsection{Physical motivation}
There has  been a lot of theoretical and experimental work on the
properties of quantum confined semiconductor devices. These systems exhibit  interesting features that may find applications in  
 nano-technology (see
e.g. \cite{harrison2016quantum,HARUTYUNYAN2009695}). Physically, the
confinement  may be achieved through an interface with the vacuum or an hetero-junction, i.e., by
interfacing the semiconductor with an isolator or with another
semiconductor of larger gap  
\cite{harrison2016quantum}. The most prominent semiconductor devices
are the so-called quantum wells, quantum wires and quantum dots, where
the material is confined in one, two and three orthogonal directions,
respectively. In the presence of symmetry along the non-confined
directions the description of the system is reduced to the study of a
Hamiltonian in dimension one, for quantum wells, and  in dimension two for quantum
wires (see e.g., \cite[Chapter 8]{harrison2016quantum}).

An interesting  phenomenon is the behavior of the energy levels of the
system when a uniform electric field is applied; this is known as
the quantum confined Stark effect \cite{PRL2173} (see also
\cite{PhysRevA.99.063410} and references therein).  In a simplified
model one may consider independently electrons or
holes. Then, in the so-called effective mass envelope function
approximation \cite{doi:10.1063/1.99562}, the effective Hamiltonian
describing the quantum confined Stark effect is formally given by
\begin{align}
\label{eq:1}
-\frac{\hbar^2}{2m} \Delta +q\,{\mathbf { F}}\cdot {\mathbf x}+V_{\rm conf}\,,
\end{align}
where $\hbar$ is Plank's constant divided by $2\pi$, $m>0$ is the
effective mass of the electron (or hole), $-q$ is the charge of the
particle (electron or hole), ${\mathbf  F}$ is the electric field, and
$V_{\rm conf}$ is some confining potential. In the simplest case one
models the confining potential as infinite potential walls, i.e., one considers
the first two terms of the Hamiltonian in \eqref{eq:1} restricted to the
domain of confinement with Dirichlet boundary conditions. The
quantum confined Stark effect modeled  as in \eqref{eq:1}
with Dirichlet boundary conditions have been considered, for instance,
in \cite{PhysRevB.28.3241,PRL2173} for quantum wells, in
\cite{PhysRevB.41.12911, Wang_2013,
	doi:10.1002/pssb.200743045,doi:10.1002/pssb.200301865,doi:10.1063/1.1849430,HARUTYUNYAN2009695}
for quantum wires, and in \cite{WEI2008786,PhysRevA.99.063410} for
quantum dots.

In this work we are interested in the study of the low-lying
eigenvalues of the following
two-dimensional Hamiltonian restricted to an open set $\Omega\subset\R^2$:
\begin{align}
\label{eq:2}
H=-\frac{\hbar^2}{2m}(\partial_x^2+\partial_y^2)+q{F}x\,,
\end{align}
acting on a dense subspace of the square integral functions
$L^2(\Omega)$ with Dirichlet boundary conditions.  This is a model
Hamiltonian to describe the energy levels of a quantum wire with cross
section $\Omega$ in the presence of an electric field perpendicular to
the non-confined direction. Notice that we have chosen coordinates
such that ${\mathbf F }$ is parallel to the $x$-axis.

The low-lying eigenvalues of this operator have been studied, partially numerically, for
different geometries in several papers such as in
\cite{PhysRevB.41.12911, Wang_2013} for squares, in
\cite{doi:10.1002/pssb.200743045} for rectangles, in
\cite{doi:10.1002/pssb.200301865, doi:10.1063/1.1849430, Pedersen_2017} for disks,
and in \cite{HARUTYUNYAN2009695} for annuli. 

From now on we will work with domains satisfying the following conditions, see Figure~\ref{Fig:1}.
\begin{definition}\label{ass1}
The set $\Omega$ is open, bounded, and connected. We assume that there exists a unique point $A_0\in \partial\Omega$ such that the first component of $A_0$ is given by 
\begin{align*}
x_{\min}=\inf_{(x,y)\in\Omega}x=\min_{(x,y)\in\overline{\Omega}}x\,.
\end{align*}
We also assume that $\partial\Omega$ is smooth near $A_0$, and that the curvature at $A_0$, denoted by $\kappa_0$, is positive.
\end{definition}
Let us describe the notion of curvature we use in this paper. Consider a smooth counterclockwise parametrization $s\mapsto\gamma(s)$ by arc-length of the boundary near $A_0=\gamma(0)$. If $\en(s)$ is the outward pointing normal to $\partial\Omega$ at $\gamma(s)$, the curvature $\kappa(s)$ at $\gamma(s)$ is defined through the relation $\gamma''(s)=-\kappa(s)\en(s)$. We have $\kappa(0)=\kappa_0>0$.

\begin{figure}[ht]
\centering
\includegraphics[width=12cm, height=9cm]{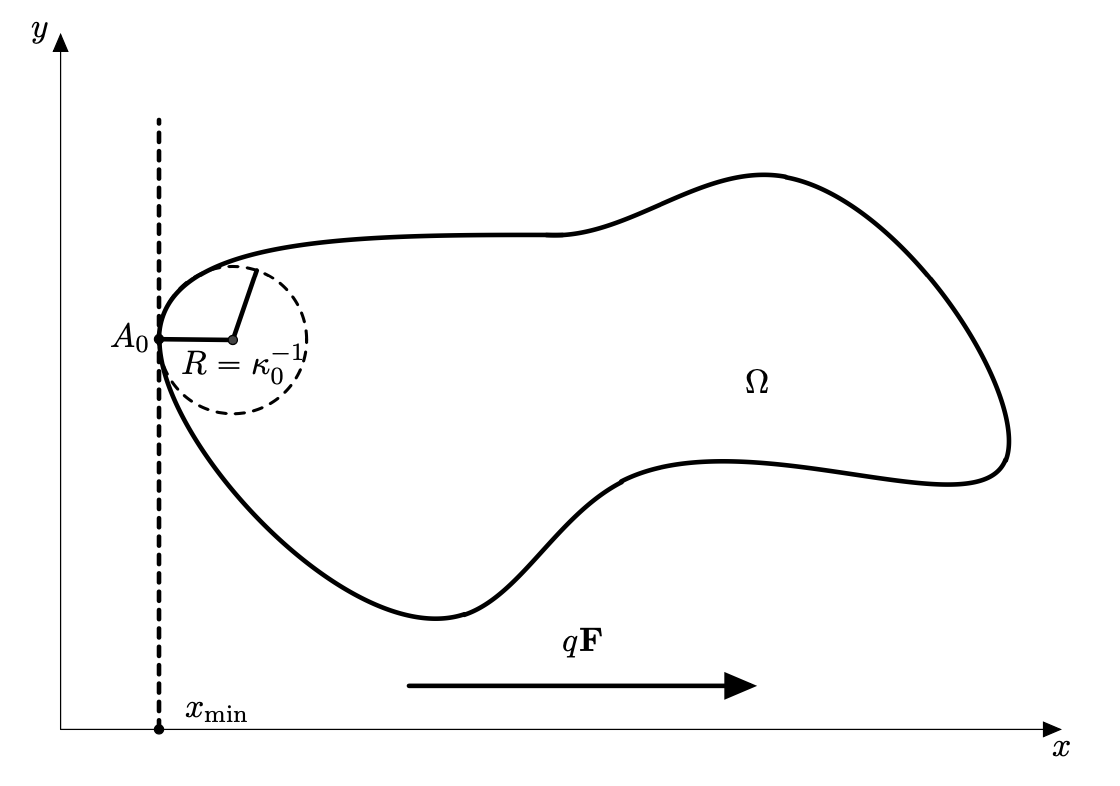}
\caption{The domain $\Omega$.}
\label{Fig:1}
\end{figure}

Then we can provide a three
terms asymptotic expansion for the individual eigenvalues of $H$ in
the limit of strong electric fields.  Our main result, Theorem
\ref{thm.main}, implies that for $q F >0$ the $n$-th eigenvalue of $H$
behaves in the limit of strong electric field as
\begin{align}
\label{eq:3}
E_n=  q F x_{\rm min}+\left( \frac{(q F \hbar)^2}{2m}\right)^{\frac{1}{3}}z_1
+(n-\tfrac{1}{2}) \hbar \left( \frac{q F \kappa_0 }{m}\right)^{\frac{1}{2}}
+ \mathscr{O}(F^{1/3})\,,
\end{align}
where $z_1\approx 2.338$ is the absolute value of the smallest zero of the
Airy function.

In particular, we find an
interesting behaviour of the energy splitting in terms of the
geometry of $\Omega$  which is proportional to the $F^{1/2}$. Such an equidistant splitting for the eigenvalues in the  strong electric field regime  has been observed in
\cite{doi:10.1063/1.1849430} for disk shaped $\Omega$ using
partially numerical methods (see Figure 7 from
\cite{doi:10.1063/1.1849430}). However, its dependence on the
curvature does not seem to have been reported before. 
Heuristically the third term in the expansion may be explained as follows:  under a strong electric field  the particle is pushed towards $x_{\rm min}$ and behaves as an harmonic oscillator in the direction perpendicular to the field with elastic constant  proportional to the curvature at $x_{\rm min}$. Let us mention here that a physics-oriented paper is in preparation. Our first investigations in this direction suggest that the spectral splitting \eqref{eq:3} might be experimentally accessible.

\subsection{Main result}
Let $\Omega$ be  as in Assumption \ref{ass1}. Notice that by factorizing $qF>0$ in the expression of $H$ in \eqref{eq:2} we have 
\begin{equation}\label{e-2}
\mathscr{L}_h:=(qF)^{-1}H=-h^2(\partial^2_x+\partial^2_y)+x\,,\quad h=\frac{\hbar}{\sqrt{2mqF}}\,.
\end{equation}
 We define 
$\mathscr{L}_h$ as the unique self-adjoint operator defined through  the quadratic form 
$$
\mathscr{Q}_h(\varphi)= h^2\int_{\Omega}\ab{\nabla\varphi(x)}^2 \dd x+\int_{\Omega} x\ab{\varphi(x)}^2 \dd x\,,\quad \varphi\in H^1_0(\Omega)\,.
$$
The operator $\mathscr{L}_h$  has domain contained in $H^1_0(\Omega)$ and acts as in \eqref{e-2}. This is the Dirichlet realization of the Stark Hamiltonian. We want to describe the first eigenvalues of this operator in the limit $h\to 0$. We obtain \eqref{eq:3} from the following result.
Denote by $(\lambda_n(h))_{n\geq 1}$ the  eigenvalues of $\mathscr{L}_h$ in increasing order, where each eigenvalue is repeated according to its multiplicity.
\begin{theorem}\label{thm.main}
Let $n\in \{1,2,\dots\}$. Then, we have as $h\to 0$
\[\lambda_n(h)=x_{\min}+z_{1}h^{\frac 23}+h(2n-1)\sqrt{\frac{\kappa_0}{2}}+\mathscr{O}(h^{ \frac 43})\,.\]
\end{theorem}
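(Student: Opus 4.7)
\emph{Overall strategy.} Since the potential $V(x,y)=x$ on $\overline\Omega$ attains its minimum uniquely at $A_0$, I expect the low-lying eigenfunctions of $\mathscr{L}_h$ to concentrate near $A_0$ on $h$-dependent scales. My plan is to (i) prove an Agmon-type decay away from $A_0$, (ii) introduce tubular coordinates $(s,t)$ adapted to $\partial\Omega$ around $A_0$, (iii) rescale $t=h^{2/3}\tau$ and $s=h^{1/2}\sigma$ to uncover a separated model combining an Airy operator in the normal direction and a harmonic oscillator in the tangential direction, and (iv) close the asymptotics with matching upper/lower bounds via the min-max principle.

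\emph{Coordinate expansion.} Since the outward normal at $A_0$ is $\en(0)=(-1,0)$, the map $\Phi(s,t)=\gamma(s)-t\,\en(s)$ satisfies
\[
x\circ\Phi(s,t)=x_{\min}+t+\tfrac12\kappa_0 s^2+\mathscr{O}(|s|^3+t s^2+t^2|s|+t^3),
\]
while the Euclidean metric pulls back to $\dd t^{2}+(1-t\kappa(s))^{2}\dd s^{2}$. Conjugating by $(1-t\kappa)^{1/2}$ turns $\mathscr{L}_h$ into an operator on $L^2(\dd s\,\dd t)$ of the shape
\[
\tilde{\mathscr{L}}_h=-h^2\partial_t^2-h^2\partial_s(1-t\kappa)^{-2}\partial_s+x\circ\Phi+h^2 W(s,t),
\]
with $W$ a bounded potential. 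Rescaling $t=h^{2/3}\tau$, $s=h^{1/2}\sigma$ and subtracting $x_{\min}$ recasts $\tilde{\mathscr{L}}_h-x_{\min}$ as
\[
h^{2/3}\bigl(-\partial_\tau^{2}+\tau\bigr)+h\bigl(-\partial_\sigma^{2}+\tfrac{\kappa_0}{2}\sigma^{2}\bigr)+\mathscr{O}(h^{4/3}),
\]
acting on the rescaled half-strip with Dirichlet at $\tau=0$. By separation of variables the $n$-th eigenvalue of the unperturbed part is exactly $z_{1}h^{2/3}+(2n-1)\sqrt{\kappa_0/2}\,h$, which is the target expansion.

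\emph{Upper bound.} I would build quasimodes of the form
\[
\psi_n(s,t)=c(h)\,\chi(s,t)\,\mathrm{Ai}\bigl(h^{-2/3}t-z_{1}\bigr)\,\varphi_{n-1}\bigl(h^{-1/2}s\bigr),
\]
with $\chi$ a smooth cut-off at $A_0$, $\varphi_k$ the $k$-th normalized eigenfunction of $-\partial_\sigma^{2}+\tfrac{\kappa_0}{2}\sigma^{2}$, and $c(h)$ an overall normalization. Inserting $\psi_n$ into $\mathscr{Q}_h$ and using the Gaussian decay of $\varphi_k$ and the exponential decay of $\mathrm{Ai}$ at $+\infty$, one checks that every geometric remainder (cubic terms in $s,t$, the curvature correction $(1-t\kappa)^{-2}-1$, and $h^{2}W$) integrates to $\mathscr{O}(h^{4/3})$ on normalized quasimodes, while the boundary mismatch from $\chi$ is exponentially small; min-max then gives the matching upper bound on $\lambda_n(h)$.

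\emph{Lower bound and main obstacle.} For the reverse inequality I would combine the Agmon estimate with an IMS partition of unity to restrict $\mathscr{Q}_h$ to a neighbourhood of $A_0$, change to tubular coordinates, and then extend monotonically (by imposing Dirichlet conditions on an artificial tangential cut-off) to a form on $\R_+\times\R$ bounded below by the separated model operator up to an error $\mathscr{O}(h^{4/3})$. Min-max applied to the first $n$ eigenfunctions of the model then closes the argument. The main difficulty, in my view, lies in this last step: the cubic remainders $|s|^{3}$, $t s^{2}$, $t^{2}|s|$, $t^{3}$, the curvature correction to the principal symbol, and the conjugation potential must all be absorbed as relatively form-bounded perturbations of the model with the sharp coefficient $\mathscr{O}(h^{4/3})$ on functions whose only \emph{a priori} control is an Agmon/weighted estimate, and the harmonic-oscillator direction is only softly confining. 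A carefully tuned combination of Agmon weights in $s$ and scaled partitions of unity, adapted to the anisotropy $t\sim h^{2/3}$, $s\sim h^{1/2}$, should be the technical heart of the proof.
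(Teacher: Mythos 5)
Your proposal follows the same route as the paper: Agmon localization near $A_0$, tubular coordinates, quasimodes of Airy $\times$ Hermite type for the upper bound, and a lower bound by comparing the localized quadratic form against the separated model $\mathscr{A}_h\otimes\mathbf{1}+\mathbf{1}\otimes\mathscr{H}_h$ via min-max. The only cosmetic difference is that you conjugate by $(1-t\kappa)^{1/2}$ to flatten the measure (absorbing it into a lower-order potential $W$), whereas the paper keeps the weighted measure $m\,\dd s\,\dd t$ and exploits $m\le1$ directly; and for the lower bound the paper does not invoke an IMS partition but instead restricts to a shrinking strip $(-\delta,\delta)\times(0,h^{2/3-\eta})$ and tests the model form on the span of the first $N$ true eigenfunctions, controlling all cubic and curvature remainders with the two Agmon estimates in $t$ (including the crucial gradient estimate $\int e^{\varepsilon t^{3/2}/h}|h\nabla\psi|^2\lesssim h^{2/3}$) and the Gaussian Agmon estimate in $s$ obtained after shrinking the spectral window to $x_{\min}+z_1h^{2/3}+\mathscr{O}(h)$. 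These are equivalent localization mechanisms, and your identification of the technical heart — absorbing the anisotropic remainders at the sharp order $h^{4/3}$ on functions controlled only by weighted estimates — is exactly where the paper spends its effort.
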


\begin{remark}
Various extensions of our main theorem can be considered.
\begin{enumerate}[\rm i.]
	\item It is possible to prove asymptotic expansions in powers of $h^{1/6}$ of the low-lying eigenvalues by using a formal series analysis.
	\item In our generic geometric situation, we could even prove that the eigenfunctions admit WKB expansions.
	\item Our strategy may be adapted to deal with a finite number of non-degenerate minima, and it would even be possible to investigate tunnel effects when these minima have symmetries.
\end{enumerate}
The proofs of such extensions can be adapted from a recent literature developed in the context of the Born-Oppenheimer approximation (see for instance \cite[Chapter 11]{Ray17}, or the generalizations \cite{Keraval, Martinez07}, and also \cite{HKR17} where tunnelling estimates are also considered). 
\end{remark}
\begin{remark}
	If we replace  the potential $x$ by $ix$ in the Hamiltonian and one looks at the real part of the eigenvalues the second term in the asymptotic expansion has a factor $1/2$ (see e.g., 
	\cite{Grebenkov2017TheCA,Henry2014OnTS,MR3542006,MR2438788}). Concerning the analysis in the case of imaginary electric fields, the reader might also want to consider \cite{GH18, AGH18, AGH19}. 
\end{remark}
The rest of this article is organized as follows: 
In Section \ref{sec2} we show that low energy eigenfunctions and its derivatives are exponentially well localized around $x_{\min}$.  We use this to find an effective Hamiltonian $\widetilde{\mathscr{M}}_h$, whose low energy eigenvalues are those of $H$  modulo an exponentially small error, this is done in Section \ref{sec3}. This operator is  expressed in tubular coordinates and acts on a tiny domain around $x_{\rm min}$ with Dirichlet boundary conditions.  In the last section we provide asymptotic upper and lower bounds for the eigenvalues of  $\widetilde{\mathscr{M}}_h$.

\section{Localization near the potential minimum}\label{sec2}
The following proposition states that the eigenfunctions associated with the low-lying eigenvalues are localized in $x$ near $x_{\min}$.
\begin{proposition}\label{thm.Agmon}
Let $M>0$. There exist $\varepsilon, C,h_0>0$ such that, for all $h\in(0,h_0)$, for all eigenvalues $\lambda$ such that $\lambda\leq x_{\min}+Mh^{\frac 23}$, and all corresponding eigenfunctions $\psi$,
\begin{equation}\label{hc-1}
\int_{\Omega}e^{\varepsilon|x-x_{\min}|^{\frac 32}/h}|\psi|^2\dd {\bf x}\leq C\|\psi\|^2\,,
\end{equation}
and
\begin{equation}\label{hc-2}
\int_{\Omega}e^{\varepsilon|x-x_{\min}|^{\frac 32}/h}|h\nabla\psi|^2\dd {\bf x}\leq Ch^{2/3}\|\psi\|^2\,.
\end{equation}
\end{proposition}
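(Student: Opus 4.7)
The strategy is the standard Agmon estimate, adapted to the linear potential $V(x,y)=x$. The WKB action $\int\sqrt{(V-\lambda)_+}\,\dd x$ for the one-dimensional model $-h^2\partial_x^2+x$ near $x_{\min}$ scales like $(x-x_{\min})^{3/2}$, and the natural semiclassical length scale is $h^{2/3}$ (so that $(h^{2/3})^{3/2}=h$). This dictates the weight $\phi(x):=\varepsilon(x-x_{\min})^{3/2}$, with $\varepsilon>0$ to be fixed small; note that $\phi$ is Lipschitz and bounded on $\overline{\Omega}\subset\{x\geq x_{\min}\}$.

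First I would derive the Agmon identity. Setting $u:=e^{\phi/h}\psi\in H^1_0(\Omega)$ and testing the eigenvalue equation $-h^2\Delta\psi+x\psi=\lambda\psi$ against $e^{2\phi/h}\bar\psi$, one integration by parts (no boundary terms, thanks to the Dirichlet condition) yields
\[
h^2\int_\Omega|\nabla u|^2\,\dd\mathbf{x}+\int_\Omega\bigl(V-\lambda-|\nabla\phi|^2\bigr)|u|^2\,\dd\mathbf{x}=0.
\]
Using $|\nabla\phi|^2=\tfrac{9\varepsilon^2}{4}(x-x_{\min})$ together with $\lambda-x_{\min}\leq Mh^{2/3}$ gives
\[
V-\lambda-|\nabla\phi|^2\geq\alpha(x-x_{\min})-Mh^{2/3},\qquad\alpha:=1-\tfrac{9\varepsilon^2}{4},
\]
and I would fix $\varepsilon$ small enough that $\alpha\in(0,1)$.

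The negative part of the integrand is then supported in the strip $\Omega_{\rm in}:=\{x-x_{\min}\leq Mh^{2/3}/\alpha\}$, on which $\phi/h\leq\varepsilon(M/\alpha)^{3/2}$ is uniformly bounded. Moving the negative contribution to the right-hand side of the Agmon identity produces
\[
h^2\int_\Omega|\nabla u|^2+\int_\Omega(V-\lambda-|\nabla\phi|^2)_+|u|^2\leq Mh^{2/3}\int_{\Omega_{\rm in}}|u|^2\leq Ch^{2/3}\|\psi\|^2.
\]
To obtain \eqref{hc-1} I would observe that on $\{x-x_{\min}\geq 2Mh^{2/3}/\alpha\}$ the positive part of the effective potential is $\geq Mh^{2/3}$, which forces $\int|u|^2\leq C\|\psi\|^2$ in that outer region, while on the complementary strip $|u|^2\leq C|\psi|^2$ pointwise; adding the two gives \eqref{hc-1}. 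For \eqref{hc-2} I would start from the product-rule identity $e^{\phi/h}h\nabla\psi=h\nabla u-u\nabla\phi$ and bound
\[
\int_\Omega e^{2\phi/h}|h\nabla\psi|^2\leq 2h^2\|\nabla u\|^2+2\int_\Omega|\nabla\phi|^2|u|^2.
\]
The first term is already $O(h^{2/3})\|\psi\|^2$ by the previous display; for the second I would reuse the positive/negative split to deduce $\int(x-x_{\min})|u|^2\leq Ch^{2/3}\|\psi\|^2$, whence $\int|\nabla\phi|^2|u|^2\leq Ch^{2/3}\|\psi\|^2$ as well. There is no genuine obstacle: the argument is pure Agmon bookkeeping at scale $h^{2/3}$, and the only quantitative tuning is making $\varepsilon$ small enough for the residual $\alpha(x-x_{\min})$ to be coercive.
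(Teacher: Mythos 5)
Your argument for \eqref{hc-1} follows the paper's route step for step: the same Agmon identity for the weight $\phi=\varepsilon|x-x_{\min}|^{3/2}$, the same use of $|\nabla\phi|^2=\tfrac{9\varepsilon^2}{4}(x-x_{\min})$, the same threshold choice of $\varepsilon$ so that $\alpha=1-\tfrac{9\varepsilon^2}{4}>0$, and the same inner/outer split at scale $h^{2/3}$ (you use $M/\alpha$ and $2M/\alpha$ where the paper uses a single $R$ solving $\alpha R-M=1$, which is cosmetic).

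For \eqref{hc-2} you take a genuinely different, and in fact slightly cleaner, path. Both proofs start from the product rule $e^{\phi/h}h\nabla\psi = h\nabla u - (\nabla\phi)u$ and both observe that the Agmon inequality already gives $h^2\|\nabla u\|^2\leq Ch^{2/3}\|\psi\|^2$. Where they diverge is the term $\int|\nabla\phi|^2|u|^2=\tfrac{9\varepsilon^2}{4}\int(x-x_{\min})|u|^2$. The paper handles the factor $\sqrt{x-x_{\min}}$ by passing to a strictly smaller exponent $\tilde\varepsilon<\varepsilon$ and invoking $\sup_{t\geq0}\sqrt{t}\,e^{-(\varepsilon-\tilde\varepsilon)t^{3/2}/h}\leq Ch^{1/3}$, thereby paying for the growth with the spare exponential decay; the conclusion is then only obtained with the reduced weight $\tilde\varepsilon$. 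You instead read off $\alpha\int(x-x_{\min})|u|^2\leq Mh^{2/3}\int|u|^2\leq Ch^{2/3}\|\psi\|^2$ directly from the Agmon inequality combined with \eqref{hc-1}, which avoids introducing a second parameter and yields \eqref{hc-2} with the \emph{same} exponent $\varepsilon$ as \eqref{hc-1} (marginally stronger, though the statement only asserts existence of some $\varepsilon$). Both approaches are of comparable length; yours trades the supremum calculus for a re-use of the coercivity already contained in the identity, which some readers will find tidier.
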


\begin{proof}
We write the Agmon formula, for all $\psi\in\mathrm{Dom}(\mathscr{L}_h)$ and all bounded Lipschitz functions $\Phi$:
\[\langle \mathscr{L}_h\psi,e^{2\Phi/h}\psi\rangle=\mathscr{Q}_h(e^{\Phi/h}\psi)-\|e^{\Phi/h}\nabla\Phi\psi\|^2\,.\]
 Let $\psi$ be an eigenfunction corresponding to an eigenvalue $\lambda\leq x_{\min}+Mh^{\frac 23}$. We get
\begin{equation}\label{eq.agmonidentity}
\int_{\Omega}h^2|\nabla e^{\Phi/h}\psi|^2+(x-\lambda-|\nabla\Phi|^2)|e^{\Phi/h}\psi|^2\dd x\dd y= 0\,,
\end{equation}
and thus
\begin{align}\label{e-1}
\int_{\Omega}h^2|\nabla e^{\Phi/h}\psi|^2+(x-x_{\min}-Mh^{2/3}-|\nabla\Phi|^2)|e^{\Phi/h}\psi|^2\dd x\dd y\leq 0\,.
\end{align}
Now, we choose $\Phi=\varepsilon |x-x_{\min}|^{3/2}$ and drop the first term above to get that
\[\int_{\Omega}\left(\left(1-\frac{9}{4}\varepsilon^2\right)(x-x_{\min})-Mh^{2/3}\right)|e^{\Phi/h}\psi|^2\dd x\dd y\leq 0\,.\]
We take $\varepsilon$ such that $1-\frac{9}{4}\varepsilon^2>0$ and
fix $R>0$ such that
\begin{equation}\label{hc-3}
\left(1-\frac{9}{4}\varepsilon^2\right)R-M=1\,.
\end{equation}
We write
\begin{multline*}
\int_{|x-x_{\min}|\geq Rh^{2/3}}\left(\left(1-\frac{9}{4}\varepsilon^2\right)(x-x_{\min})-Mh^{2/3}\right)|e^{\Phi/h}\psi|^2\dd x\dd y\\
\leq-\int_{|x-x_{\min}|< Rh^{2/3}}\left(\left(1-\frac{9}{4}\varepsilon^2\right)(x-x_{\min})-Mh^{2/3}\right)|e^{\Phi/h}\psi|^2\dd x\dd y \,,
\end{multline*}
and get (using \eqref{hc-3}):
\begin{align*}
&h^{2/3}\int_{|x-x_{\min}|\geq Rh^{2/3}}|e^{\Phi/h}\psi|^2\dd x\dd y\\
&\leq \int_{|x-x_{\rm min}|\ge Rh^{2/3}}
\left(\left(1-\frac{9}{4}\varepsilon^2\right)(x-x_{\min})-Mh^{2/3}\right)|e^{\Phi/h}\psi|^2\dd x\dd y\\
&\leq Mh^{2/3}\int_{|x-x_{\min}|< Rh^{2/3}} e^{2\Phi/h}|\psi|^2\dd x\dd y \leq Ch^{2/3}\|\psi\|^2\,.
\end{align*}
This proves \eqref{hc-1}. Next we  prove \eqref{hc-2}.
 	First observe that repeating the calculation above without dropping the first term in 
 	\eqref{e-1} we get that
 	\begin{align*}
 	\int_{\Omega}h^2|\nabla e^{\Phi/h}\psi|^2\dd x\dd y\leq  Ch^{2/3}\|\psi\|^2\,.
 	\end{align*} 	
 	 Next, fix  $0<\tilde{\varepsilon}<\varepsilon$. We have that
$$e^{\tilde{\varepsilon}\vert x-x_{\rm min}\vert^{\frac{3}{2}}/h}(h\nabla \psi)=-\frac{3\tilde{\varepsilon}}{2}\sqrt{ x-x_{\rm min}}\; e^{\tilde{\varepsilon}\vert x-x_{\rm min}\vert^{\frac{3}{2}}/h} \psi+h\nabla \left (e^{\tilde{\varepsilon}\vert x-x_{\rm min}\vert^{\frac{3}{2}}/h} \psi\right ).$$
Using the triangle inequality,  \eqref{eq.agmonidentity}, \eqref{hc-1} and the estimate 
$$\sup_{t\geq 0}\sqrt{t}\; e^{-(\varepsilon-\tilde{\varepsilon})t^{\frac{3}{2}}/h}\leq C h^{\frac{1}{3}},$$
we complete the proof of \eqref{hc-2}.
\end{proof}
We denote by $\complement B(A_0,\eta)$ the complement of the open disc $B(A_0,\eta)$.
\begin{corollary}\label{cor-hc}
	Let $M,\eta >0$. There exist $c_\eta,C_\eta, h_0>0$ such that if  $h\in(0,h_0)$, then any eigenfunction $\psi$ corresponding to an eigenvalue $\lambda\leq x_{\min}+Mh^{\frac 23}$  satisfies the estimates
	\[\int_{\Omega\cap\complement B(A_0,\eta)}|\psi|^2\dd {\bf x}\leq C_\eta e^{-c_\eta/h}\|\psi\|^2\,\]
	and
	\[\int_{\Omega\cap\complement B(A_0,\eta)}|\nabla\psi|^2\dd {\bf x}\leq C_\eta e^{-c_\eta/h}\|\psi\|^2\,.\]
\end{corollary}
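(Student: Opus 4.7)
The plan is to upgrade the weighted Agmon estimates of Proposition \ref{thm.Agmon} into pointwise exponential smallness on the complement of any fixed neighborhood of $A_0$. The key geometric ingredient is the uniqueness assumption on $A_0$ in Assumption \ref{ass1}: since the continuous function $(x,y)\mapsto x-x_{\min}$ vanishes on $\overline{\Omega}$ only at $A_0$, and since $\overline{\Omega}\cap\complement B(A_0,\eta)$ is compact, this function attains a strictly positive minimum $\delta_\eta>0$ on that set.

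Given this lower bound, the first estimate is immediate from \eqref{hc-1}: on $\Omega\cap \complement B(A_0,\eta)$ one has $e^{\varepsilon|x-x_{\min}|^{3/2}/h}\geq e^{\varepsilon \delta_\eta^{3/2}/h}$, so
$$
\int_{\Omega\cap\complement B(A_0,\eta)}|\psi|^2\,\dd\mathbf{x}\leq e^{-\varepsilon\delta_\eta^{3/2}/h}\int_\Omega e^{\varepsilon|x-x_{\min}|^{3/2}/h}|\psi|^2\,\dd\mathbf{x}\leq C\,e^{-\varepsilon\delta_\eta^{3/2}/h}\|\psi\|^2.
$$
One may therefore take $c_\eta$ to be any constant strictly smaller than $\varepsilon\delta_\eta^{3/2}$.

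The second estimate follows from \eqref{hc-2} along exactly the same lines. Dividing through by $h^2$ produces an extra prefactor $h^{-4/3}$, but this polynomial loss is absorbed into the exponential by decreasing $c_\eta$ by an arbitrarily small amount; more precisely, using $h^{-4/3}e^{-\varepsilon\delta_\eta^{3/2}/h}\leq C_\eta' e^{-c_\eta/h}$ with $c_\eta<\varepsilon\delta_\eta^{3/2}$ for $h$ small enough.

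There is no real obstacle here: the corollary reduces, via Proposition \ref{thm.Agmon}, to the elementary fact that $x-x_{\min}$ is bounded below by a positive constant on any closed subset of $\overline{\Omega}$ that avoids $A_0$. The uniqueness of the minimizer is what guarantees a nontrivial decay rate $c_\eta$; without it one could only conclude $L^2$ smallness outside a neighborhood of the full minimizing set.
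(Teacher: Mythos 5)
Your proof is correct and follows essentially the same route as the paper's (which is stated very tersely): compactness of $\overline{\Omega}\cap\complement B(A_0,\eta)$ plus uniqueness of $A_0$ gives a positive lower bound on $x-x_{\min}$, and then Proposition~\ref{thm.Agmon} immediately yields the exponential smallness. Your explicit handling of the $h^{-4/3}$ prefactor in the gradient estimate is a detail the paper leaves implicit, but it is handled correctly by slightly shrinking $c_\eta$.
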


\begin{proof}
The set $\overline{\Omega}\cap\complement B(A_0,\eta)$ is compact. The map $\overline{\Omega}\cap\complement B(A_0,\eta)\ni  (x,y)\mapsto x-x_{\min}$ is continuous  and positive, thus it has a positive lower bound. 
The conclusion follows from Theorem \ref{thm.Agmon}.
\end{proof}

\section{Tubular coordinates and localized operator}\label{sec3}

We can now reduce our investigation to a neighborhood of $A_0$. 

\subsection{Tubular coordinates}
\begin{figure}[ht]
\centering
\includegraphics[width=7cm, height=10cm]{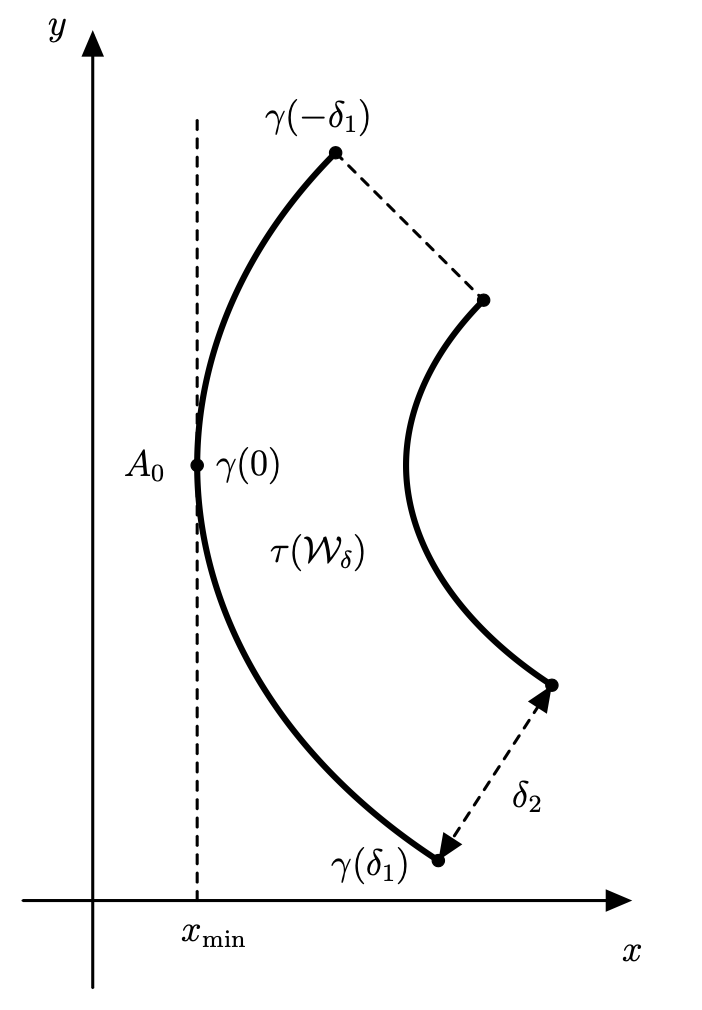}
\caption{The local tubular coordinates}
\label{Fig:2}
\end{figure}
 We use tubular coordinates in a neighborhood of  $A_0$ (see for instance \cite{FH11}). Due to Assumption \ref{ass1} there exist $\delta_1, \delta_2>0$ such that (see Figure \ref{Fig:2})
\[\mathcal{W}_\delta=(-\delta_1,\delta_1)\times (0,\delta_2)\ni (s,t)\mapsto \et(s,t)=\eg(s)-t\en(s)\,\in \et(\mathcal{W}_\delta)\]
induces a diffeomorphism. Here, $\en$ is the outward pointing normal and $\eg$ is the  natural length-parametrization of the boundary $\partial \Omega$; we set  $\eg(0)=A_0$. 
Denoting by $\theta(s)$  the turning angle at the point $\eg(s)\in \partial\Omega$ 
we may write $\en(s)=(\cos\theta(s), \sin\theta(s))$,  the tangent vector $\eg'(s)=(- \sin\theta(s), \cos\theta(s))$,  and the curvature $\kappa(s)=\theta'(s)$.  
The Jacobian of $\et=(\tau_1,\tau_2)=\tau_1{\bf e}_1+\tau_2{\bf e}_2$ is given by 
\begin{align}\label{es-2}
m(s,t)=1-\kappa(s) t\,.
\end{align}
We fix  $\delta_2>0$ so small that the map $\et$ induces a local diffeomorphism between a rectangle  and the tubular neighborhood of the boundary. 
In view of the definition of $A_0$ we have
\begin{align}\label{e-3}
\en'(0)\cdot {\bf e}_1 =0\,,\quad\eg'(0)\cdot {\bf e}_1 =0\,,\quad\eg''(0)=-\kappa_0\en(0)=\kappa_0{\bf e}_1\,,
\end{align}
where $\kappa _0=\kappa(0)$.

\subsection{Spectral reduction to a localized operator}
For $0<\delta\le \delta_1,\delta_2 $ we define $\mathscr{L}_{h,\delta}$ to be  the Dirichlet realization of $-h^2\partial^2_x-h^2\partial^2_y+x$ on $L^2(\et(\mathcal{W}_\delta))$ with $\mathcal{W}_\delta=(-\delta,\delta)\times (0,\delta)$. 

We denote by $\lambda_n(h,\delta)$ the corresponding eigenvalues of 
$\mathscr{L}_{h,\delta}$.
By using the decay estimates of Theorem \ref{thm.Agmon}, we get the following:
\begin{proposition}\label{prop.exp-approx}
	For every fixed $n\geq 1$ there exist $k,K>0$ such that
	\[\lambda_n(h,\delta)-Ke^{-k/h}\leq \lambda_n(h)\leq\lambda_n(h,\delta)\,.\]
\end{proposition}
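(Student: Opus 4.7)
The plan is to establish the two inequalities separately by a direct min--max comparison, with the Agmon estimates of Corollary \ref{cor-hc} providing the exponentially small error for the non-trivial direction.

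\textbf{Upper bound $\lambda_n(h)\le \lambda_n(h,\delta)$.} The diffeomorphism $\et$ sends $\mathcal{W}_\delta$ into $\Omega$, and the $t=0$ portion of $\partial\mathcal{W}_\delta$ is mapped to $\partial\Omega$. Hence, given the first $n$ eigenfunctions $\phi_1,\dots,\phi_n$ of $\mathscr{L}_{h,\delta}$, their extensions by zero lie in $H^1_0(\Omega)$ and leave both the norm and the quadratic form $\mathscr{Q}_h$ unchanged. The min--max principle applied to $\mathscr{L}_h$ on the span of these extensions yields $\lambda_n(h)\le \lambda_n(h,\delta)$ at once.

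\textbf{Lower bound $\lambda_n(h,\delta)-K e^{-k/h}\le \lambda_n(h)$.} Fix a smooth cut-off $\chi\in C^\infty_c(\et(\mathcal{W}_\delta))$ with $\chi\equiv 1$ on $\et(\mathcal{W}_{\delta/2})$ and supported away from the ``interior'' portion $\et(\{s=\pm\delta\}\cup\{t=\delta\})$ of $\partial\et(\mathcal{W}_\delta)$. Since $\chi$ vanishes near the interior boundary and $\psi_j$ (the first $n$ eigenfunctions of $\mathscr{L}_h$) already vanishes on $\partial\Omega$, each $\chi\psi_j$ belongs to $H^1_0(\et(\mathcal{W}_\delta))$, so is an admissible test function for $\mathscr{L}_{h,\delta}$. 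By construction $\mathrm{supp}(\nabla\chi)\subset\Omega\cap\complement B(A_0,\eta)$ for some $\eta>0$, so Corollary \ref{cor-hc} gives
\begin{equation*}
\|\psi_j-\chi\psi_j\|^2+\|\nabla(\psi_j-\chi\psi_j)\|^2\le C e^{-c/h}.
\end{equation*}
Using the IMS-type localization identity
\begin{equation*}
\mathscr{Q}_{h,\delta}(\chi\psi_j)=\langle\mathscr{L}_h\psi_j,\chi^2\psi_j\rangle+h^2\|\psi_j\nabla\chi\|^2=\lambda_j(h)\|\chi\psi_j\|^2+\mathscr{O}(e^{-c/h}),
\end{equation*}
and the polarized version for $\mathscr{Q}_{h,\delta}(\chi\psi_i,\chi\psi_j)$, one obtains the same exponentially small error in the off-diagonal terms.

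\textbf{Conclusion via min--max.} Let $V_n:=\mathrm{span}(\chi\psi_1,\dots,\chi\psi_n)\subset H^1_0(\et(\mathcal{W}_\delta))$. The Gram matrix $G=(\langle\chi\psi_i,\chi\psi_j\rangle)$ satisfies $G=I+\mathscr{O}(e^{-c/h})$, so for $h$ small enough $V_n$ is $n$-dimensional and $G$ is invertible. On $V_n$ the matrix of $\mathscr{Q}_{h,\delta}$ equals $\mathrm{diag}(\lambda_1(h),\dots,\lambda_n(h))+\mathscr{O}(e^{-c/h})$. A standard perturbation argument (equivalently, applying min--max to $\mathscr{L}_{h,\delta}$ on $V_n$ and using that the $n$-th generalized eigenvalue of the pencil $(A,G)$ is within $\mathscr{O}(e^{-c/h})$ of $\lambda_n(h)$) yields $\lambda_n(h,\delta)\le \lambda_n(h)+K e^{-k/h}$.

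\textbf{Expected obstacle.} The only delicate point is choosing the cut-off $\chi$ so that both $\chi\psi_j\in H^1_0(\et(\mathcal{W}_\delta))$ holds and $\nabla\chi$ is supported in a region where the Agmon weight is uniformly large; this is where Corollary \ref{cor-hc} (rather than the Gaussian-in-$x$ weight of Proposition \ref{thm.Agmon}) is essential, since a point of $\partial\et(\mathcal{W}_\delta)$ can have $x$-coordinate close to $x_{\min}$. Once this geometric setup is in place, the rest is bookkeeping.
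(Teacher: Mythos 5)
Your proof is correct and follows essentially the same strategy as the paper: form-domain inclusion (equivalently, extension by zero) gives $\lambda_n(h)\le\lambda_n(h,\delta)$, and a smooth cut-off equal to $1$ near $A_0$, combined with the exponential decay of Corollary \ref{cor-hc} on $\mathrm{supp}(\nabla\chi)$, turns the eigenfunctions of $\mathscr{L}_h$ into quasi-modes for $\mathscr{L}_{h,\delta}$ for the other inequality. The paper only writes out the $n=1$ case and says the rest is similar, while you spell out the Gram-matrix argument for general $n$; the underlying ideas are the same.
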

\begin{proof}
The second inequality is a direct consequence of the min-max variational principle since the form domain of $\mathscr{L}_{h,\delta}$ is included in the one of $\mathscr{L}_{h}$. 

We now prove the first inequality but only for $n=1$. Let $\psi_1$ be a unit eigenvector of $\mathscr{L}_{h}$ corresponding to its groundstate $\lambda_1(h)$. Let $0\leq \chi_\delta\leq 1$ be a smooth cut-off function such that $\chi_\delta(\et(s,t))=1$ if $|s|,t\leq \delta/2$, and $\chi_\delta(\et(s,t))=0$ if $|s|\geq\frac34\delta$ or $t\geq\frac34\delta$. Then $\chi_\delta \psi_1$ belongs to the domain of $\mathscr{L}_{h,\delta}$ and since the support of the derivatives of $\chi_\delta$ is away from $A_0$, from Corollary \ref{cor-hc} we have 
$$\Vert \chi_\delta \psi_1\Vert =1+\mathscr{O}(e^{-k/h}),\; \Vert (\nabla\chi_\delta)\nabla \psi_1\Vert =\mathscr{O}(e^{-k/h}),\; \Vert (\mathscr{L}_{h,\delta}-\lambda_1(h))\chi_\delta \psi_1\Vert=\mathscr{O}(e^{-k/h}).$$
Then the min-max principle implies:
$$\lambda_1(h,\delta)\leq \frac{\langle\chi_\delta\psi_1, \mathscr{L}_{h,\delta}\chi_\delta\psi_1\rangle}{\Vert \chi_\delta \psi_1\Vert^2}\leq \lambda_1(h)+\mathscr{O}(e^{-k/h}).$$
If $n>1$, one can construct $n$ quasi-modes for $\mathscr{L}_{h,\delta}$ out of the first $n$ eigenmodes of $\mathscr{L}_{h}$ by using a similar argument.
\end{proof}

Therefore, we can focus on the spectral analysis of $\mathscr{L}_{h,\delta}$. The operator $\mathscr{L}_{h,\delta}$ is unitarily equivalent to the Dirichlet realization of 
\[\mathscr{M}_h=-h^2m^{-1}\partial_sm^{-1}\partial_s-h^2m^{-1}\partial_t m\partial_t+\tau_1(s,t)\,,\]
acting on $L^2(\mathcal{W}_\delta, m(s,t)\dd s\dd t)$.

By Taylor expansion near $(0,0)$,  we have
\begin{align*}
\et(s,t)=\eg(0)+s\eg'(0)+\frac{s^2}{2}\eg''(0)-t(\en(0)+s\en'(0))+\mathscr{O}(|s|^3+|t s^2 |)
\end{align*}
Thus, in view of \eqref{e-3}, we have in particular that
\begin{equation}\label{eq.c1}
\tau_1(s,t)=x_{\min}+\frac{\kappa_0}{2}s^2+t+\mathscr{O}(|s|^3+|ts^2|)\,.
\end{equation}

Note that for $\delta$ small enough, there exists $\alpha>0$ such that for all $(s,t)\in(-\delta,\delta)\times(0,\delta)$:
\[\tau_1(s,t)\geq x_{\min}+\alpha (s^2+t)\,.\]
\begin{proposition}\label{prop.Agmont}
	Let $M>0$. There exist $\varepsilon, C,h_0>0$ such that, for all $h\in(0,h_0)$, and  for all eigenfunctions $\psi$ corresponding to  eigenvalues $\lambda$ of $\mathscr{M}_h$ with $\lambda\leq x_{\min}+Mh^{\frac 23}$, we have:
	\begin{equation}\label{eq.Agmont1}
	\int_{\mathcal{W}_\delta}e^{\varepsilon t^{\frac 32}/h}|\psi|^2\dd s\dd t\leq C\|\psi\|^2\,,
	\end{equation}
	\begin{equation}\label{eq.Agmont2}
	\int_{\mathcal{W}_\delta}e^{\varepsilon t^{\frac 32}/h}|h\nabla_{s,t}\psi|^2\dd s\dd t\leq Ch^{2/3}\|\psi\|^2\,.
	\end{equation}
\end{proposition}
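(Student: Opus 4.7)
The argument follows the same pattern as Proposition~\ref{thm.Agmon}: derive an Agmon identity for $\mathscr{M}_h$, choose a weight $\Phi$ growing like $t^{3/2}$, and exploit the lower bound $\tau_1(s,t)\geq x_{\min}+\alpha(s^2+t)$ to beat the gradient term of $\Phi$. The only novelty is that we now work on the weighted space $L^2(\mathcal{W}_\delta,m\,\dd s\,\dd t)$, so the quadratic form of $\mathscr{M}_h$ reads $h^2\int(m^{-1}|\partial_s\varphi|^2+m|\partial_t\varphi|^2)\,\dd s\,\dd t+\int\tau_1 m|\varphi|^2\,\dd s\,\dd t$, and the natural squared gradient of $\Phi$ in the induced metric is $|\nabla_g\Phi|^2=m^{-2}(\partial_s\Phi)^2+(\partial_t\Phi)^2$. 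For a real Lipschitz $\Phi$, an integration by parts using the Dirichlet condition on $\partial\mathcal{W}_\delta$ gives the Agmon identity
\[
\int_{\mathcal{W}_\delta}h^2\left(m^{-1}|\partial_s(e^{\Phi/h}\psi)|^2+m|\partial_t(e^{\Phi/h}\psi)|^2\right)\dd s\,\dd t+\int_{\mathcal{W}_\delta}(\tau_1-\lambda-|\nabla_g\Phi|^2)\,|e^{\Phi/h}\psi|^2\,m\,\dd s\,\dd t=0
\]
for any eigenfunction with $\mathscr{M}_h\psi=\lambda\psi$, in direct analogy with \eqref{eq.agmonidentity}.

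To prove \eqref{eq.Agmont1}, I would take $\Phi(s,t)=\varepsilon\, t^{3/2}$, so that $|\nabla_g\Phi|^2=(9\varepsilon^2/4)\,t$ since $\Phi$ is independent of $s$. Dropping the non-negative gradient part of the Agmon identity and using $\tau_1\geq x_{\min}+\alpha\,t$ together with $\lambda\leq x_{\min}+Mh^{2/3}$ yields
\[
\int_{\mathcal{W}_\delta}\left(\bigl(\alpha-\tfrac{9\varepsilon^2}{4}\bigr)t-Mh^{2/3}\right)e^{2\Phi/h}|\psi|^2\,m\,\dd s\,\dd t\leq 0.
\]
I would then choose $\varepsilon>0$ small enough that $\alpha-9\varepsilon^2/4>0$, fix $R>0$ with $(\alpha-9\varepsilon^2/4)R-M=1$, and split the integral between $\{t\geq Rh^{2/3}\}$ and $\{t<Rh^{2/3}\}$; on the latter set $e^{2\Phi/h}$ is bounded by an absolute constant, and the right-hand side is majorised by $Ch^{2/3}\|\psi\|^2$. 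This reproduces verbatim the argument leading to \eqref{hc-1}, the uniformly bounded Jacobian $m$ entering only as a harmless multiplicative factor.

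For the gradient bound \eqref{eq.Agmont2}, I would retain the gradient term in the Agmon identity to obtain $h^2\int_{\mathcal{W}_\delta}(m^{-1}|\partial_s(e^{\Phi/h}\psi)|^2+m|\partial_t(e^{\Phi/h}\psi)|^2)\,\dd s\,\dd t\leq Ch^{2/3}\|\psi\|^2$. Then, for any $0<\tilde\varepsilon<\varepsilon$, the commutator identity
\[
e^{\tilde\varepsilon t^{3/2}/h}(h\partial_t\psi)=h\partial_t\left(e^{\tilde\varepsilon t^{3/2}/h}\psi\right)-\tfrac{3\tilde\varepsilon}{2}\sqrt{t}\,e^{\tilde\varepsilon t^{3/2}/h}\psi,
\]
together with the analogous relation in $s$ (with no correction term since $\Phi$ is independent of $s$), the triangle inequality, the already established \eqref{eq.Agmont1}, and the sup estimate $\sup_{t\geq 0}\sqrt{t}\,e^{-(\varepsilon-\tilde\varepsilon)t^{3/2}/h}\leq Ch^{1/3}$, converts the weighted $H^1$ bound on $e^{\Phi/h}\psi$ into the required weighted $L^2$ bound on $h\nabla_{s,t}\psi$.

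The only genuine subtlety compared to Proposition~\ref{thm.Agmon} is the bookkeeping imposed by the curved geometry: the Agmon identity must be set up with the Jacobian $m(s,t)=1-\kappa(s)t$ in the measure and with the metric gradient $\nabla_g$ in place of $\nabla$. However, because $m$ is uniformly bounded above and below on $\mathcal{W}_\delta$ for $\delta$ small, and because the chosen weight $\Phi=\varepsilon t^{3/2}$ depends only on $t$ (so that $|\nabla_g\Phi|^2$ collapses to $(\partial_t\Phi)^2$ and drops its dependence on $m$), the curvature-induced corrections affect the analysis only through bounded multiplicative constants and play no role in the choice of $\varepsilon$.
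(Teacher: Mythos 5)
Your proof is correct and follows the same approach the paper sketches: set up the Agmon--Lithner identity for $\mathscr{M}_h$, choose the weight $\Phi=\varepsilon t^{3/2}$, and use $\tau_1\geq x_{\min}+\alpha t$ to dominate the gradient correction $(\partial_t\Phi)^2=\tfrac{9}{4}\varepsilon^2 t$. Your version is somewhat more explicit than the paper's one-paragraph reference back to Proposition~\ref{thm.Agmon} (in particular you observe that $\Phi$ being $s$-independent makes $|\nabla_g\Phi|^2$ free of $m$, which is exactly why the Jacobian only enters through harmless bounded constants), but the underlying argument is the same.
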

\begin{proof} 
If $\delta$ is small enough then $m(s,t)\sim 1$ and $x-x_{\rm min}\sim t$. Therefore we can directly use the strategy of  Proposition \ref{thm.Agmon} applied for the eigenfunctions of $\mathscr{L}_{h,\delta}$, but in both \eqref{eq.Agmont1} and \eqref{eq.Agmont2} we need to choose an $\varepsilon$ which is smaller than the one in  Proposition \ref{thm.Agmon} in order to control the linear growth in $t$ of $m(s,t)-1$ and $\partial_s m(s,t)$. 
\end{proof}

Therefore, the operator $\mathscr{M}_h$ can be replaced by 
\[\widetilde{\mathscr{M}}_h=-h^2m^{-1}\partial_sm^{-1}\partial_s-h^2m^{-1}\partial_t m\partial_t+\tau_1(s,t)\,,\]
with Dirichlet boundary conditions, acting on $L^2(\mathcal{W}_{\delta,h}, m(s,t)\dd s\dd t)$, with \[\mathcal{W}_{\delta,h}=(-\delta,\delta)\times(0,h^{\frac 23-\eta})\,,\] for some $\eta\in\left(0,\frac 13\right)$ with corresponding quadratic form
\begin{align}\label{qf-m-tilde}
\langle\psi, \widetilde{\mathscr{M}}_h\psi\rangle=
\int_{\mathcal{W}_{\delta,h}}\Big(m^{-2}|h\partial_s\psi|^2+
	|h\partial_t \psi|^2+\tau_1(s,t)|\psi|^2\Big)m\dd s\dd t\,.
\end{align} 

Let $\mu_n(h)$ be the asociated (ordered) eigenvalues of $\widetilde{\mathscr{M}}_h$. The decay estimates of Proposition \ref{prop.Agmont} are still satisfied by the eigenfunctions of $\widetilde{\mathscr{M}}_h$ with eigenvalues $\lambda\leq x_{\min}+Mh^{\frac 23}$. By using this exponential decay, there exist $C, h_0>0$ such that, for all $h\in(0,h_0)$,
\begin{equation}\label{eeq1}
\mu_n(h)-\tfrac{1}{C}e^{-Ch^{-3\eta/2}}\leq\lambda_n(h,\delta)\leq \mu_n(h)\,.
\end{equation}
Thus, modulo an exponentially small error, the asymptotic analysis of $\lambda_n(h)$ is reduced to that of $\mu_n(h)$.

By shrinking the spectral window, we can even get a localization with respect to  the $s$ variable, as stated in the next proposition.
\begin{proposition}\label{prop.Agmons}
	Let $M>0$ and $\eta\in (0,\frac13)$. There exist $\varepsilon, C,h_0>0$ such that, for all $h\in(0,h_0)$, and  for all eigenfunctions $\psi$ of $\widetilde{\mathscr{M}}_h$ corresponding to eigenvalues $\lambda\leq x_{\min}+h^{\frac 23}z_{1}+Mh$ we have:
\[\int_{\mathcal{W}_{\delta,h}}e^{\varepsilon s^{2}/h}|\psi|^2\dd s\dd t\leq C\|\psi\|^2\,.\]
\end{proposition}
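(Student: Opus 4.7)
The approach will be an Agmon-type estimate in the tubular coordinates with a quadratic weight $\Phi(s) = \varepsilon s^2$ (for $\varepsilon > 0$ small), combined with a one-dimensional transverse Airy lower bound in the $t$ variable. The role of the Airy bound is to absorb the leading $z_1 h^{2/3}$ contribution to $\lambda$; what remains is a harmonic-oscillator-scale Agmon problem in $s$ on the natural scale $s \sim h^{1/2}$, which proceeds by a split analogous to the proof of Proposition \ref{thm.Agmon}.

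First, I would write the Agmon identity in the weighted metric associated with $\widetilde{\mathscr{M}}_h$: setting $u := e^{\Phi/h}\psi$, and using that $\Phi$ depends only on $s$,
\[
\int_{\mathcal{W}_{\delta,h}} \bigl(h^2 m^{-2}|\partial_s u|^2 + h^2 |\partial_t u|^2\bigr)\, m\, \dd s\, \dd t + \int_{\mathcal{W}_{\delta,h}} \bigl(\tau_1 - \lambda - m^{-2}(\partial_s \Phi)^2\bigr)\, |u|^2\, m\, \dd s\, \dd t = 0.
\]
The key input is a one-dimensional lower bound in $t$: for each fixed $s\in(-\delta,\delta)$, the Dirichlet operator $-h^2 m(s,\cdot)^{-1}\partial_t m(s,\cdot) \partial_t + t$ on $L^2((0, h^{2/3-\eta}), m(s,\cdot)\, \dd t)$ has lowest eigenvalue at least $z_1 h^{2/3} - o(h)$. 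Since $m(s,t) = 1 - \kappa(s) t = 1 + O(h^{2/3-\eta})$ uniformly on $\mathcal{W}_{\delta, h}$ and $\eta < 1/3$, this follows by a quadratic-form comparison with the flat-measure Dirichlet Airy operator on $(0, h^{2/3-\eta})$, whose lowest eigenvalue is $z_1 h^{2/3}$ up to an exponentially small correction. Applied fibre-wise to $u(s,\cdot)$ and integrated in $s$, this yields
\[
\int_{\mathcal{W}_{\delta,h}} \bigl(h^2 |\partial_t u|^2 + t\, |u|^2\bigr)\, m\, \dd s\, \dd t \geq \bigl(z_1 h^{2/3} - o(h)\bigr) \int_{\mathcal{W}_{\delta,h}} |u|^2\, m\, \dd s\, \dd t.
\]

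Inserting this into the Agmon identity, using the Taylor expansion \eqref{eq.c1} in the form $\tau_1 - x_{\min} = \frac{\kappa_0}{2} s^2 + t + O(|s|^3 + |t|s^2)$, the hypothesis $\lambda - x_{\min} \leq z_1 h^{2/3} + M h$, and $(\partial_s \Phi)^2 = 4\varepsilon^2 s^2$, I would absorb the cubic error $|s|^3 + |t|s^2 \leq C(\delta + h^{2/3-\eta}) s^2$ into $\frac{\kappa_0}{2} s^2$ by taking $\delta$ and $h$ small; after discarding the nonnegative $s$-kinetic term I obtain
\[
\int_{\mathcal{W}_{\delta,h}} \Bigl(\tfrac{\kappa_0}{4} s^2 - C h\Bigr)\, |u|^2\, m\, \dd s\, \dd t \leq 0,
\]
provided $\varepsilon$ is chosen so that $4\varepsilon^2 < \kappa_0/4$. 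Then I would conclude by the standard split at the harmonic-oscillator scale: fix $R > 0$ with $\kappa_0 R^2/4 - C \geq 1$; on $\{|s| \geq R h^{1/2}\}$ the integrand is $\geq h\, |u|^2 m$, while on the inner region $\{|s| < R h^{1/2}\}$ one has $\Phi/h < \varepsilon R^2$, so $e^{2\Phi/h}$ is uniformly bounded and the corresponding integral is $\leq e^{2\varepsilon R^2} \|\psi\|^2$. Rearranging gives $\int_{\mathcal{W}_{\delta,h}} e^{2\varepsilon s^2/h} |\psi|^2\, m\, \dd s\, \dd t \leq C \|\psi\|^2$, and since $m$ is bounded away from $0$ and infinity this yields the claim (after renaming $\varepsilon$).

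The main obstacle will be the transverse Airy lower bound just used: the weight $m(s,t)$ couples $s$ and $t$, so the operator is not exactly separable. However, since $|t| \leq h^{2/3-\eta}$ with $\eta < 1/3$, the perturbation $m - 1$ is of order $h^{2/3-\eta} = o(h^{1/3})$, which suffices to control the eigenvalue at the required $o(h)$ precision uniformly in $s$. This step is essential: without it, the naive estimate would have to balance $z_1 h^{2/3}$ against $\frac{\kappa_0}{2} s^2$, forcing a split at the larger scale $s \sim h^{1/3}$, on which $e^{\varepsilon s^2/h}$ diverges as $h \to 0$.
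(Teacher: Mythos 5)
Your proof is correct and follows essentially the same approach as the paper: an Agmon estimate with $\Phi\propto \varepsilon s^2$, dropping the $s$-kinetic term, using the fibre-wise Dirichlet Airy lower bound in $t$ to cancel the $z_1 h^{2/3}$ contribution to $\lambda$ (the paper states this as ``Dirichlet bracketing in $t$, $s$ fixed''), and splitting at the scale $|s|\sim h^{1/2}$. The only cosmetic difference is that you apply the transverse Airy bound directly to the $m$-weighted fibre operator, whereas the paper first extracts the factor $1-Ch^{2/3-\eta}\le m$ and applies the bound in the flat measure; both bookkeepings produce the same $O(h^{4/3-\eta})=o(h)$ error, which is where $\eta<1/3$ is used.
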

\begin{proof}
The proof follows the same lines as that of Proposition \ref{thm.Agmon}. We let $\Phi(s)=\varepsilon s^2/2$ and write the Agmon formula:
\begin{multline*}
\int_{\mathcal{W}_{\delta,h}}\Big(m^{-2}|h\partial_se^{\Phi/h}\psi|^2+
|h\partial_te^{\Phi/ {h}}\psi|^2+{\tau_1}(s,t)|e^{\Phi/h}\psi|^2\\
-(\lambda+|\nabla\Phi|^2)|e^{\Phi/h}\psi|^2 \Big)m\dd s\dd t=0\,.
\end{multline*}	
First, we drop the tangential derivative:
\[\int_{\mathcal{W}_{\delta,h}}\left(|h\partial_te^{\Phi/{h}}\psi|^2+{\tau_1}(s,t)|e^{\Phi/h}\psi|^2-(\lambda+|\nabla\Phi|^2)|e^{\Phi/h}\psi|^2 \right)m\dd s\dd t\leq0\,.\]
We observe from \eqref{eq.c1} that there exist $\tilde{k}>0$ such that $\tau_1(s,t)\geq x_{\rm min}+t+ \tilde{k} s^2$.  Introducing the last  inequality in the above integral we get:
\begin{multline*}
\int_{\mathcal{W}_{\delta,h}}\Big(|h\partial_te^{\Phi/h}\psi|^2+t|e^{\Phi/h}\psi|^2\Big)m\dd s\dd t\\
+\int_{\mathcal{W}_{\delta,h}}(-\lambda+x_{\min}+\tilde k s^2-|\nabla\Phi|^2)|e^{\Phi/h}\psi|^2 \; m\dd s\dd t\leq 0\,.
\end{multline*}
On $\mathcal{W}_{\delta,h}$, for sufficiently small $h$ , there exists $C>0$ such that  $ m(s,t)\geq 1 -Ch^{2/3-\eta}$.
By using this in the above  integrals we have that
\begin{multline*}
(1-Ch^{2/3-\eta})\int_{\mathcal{W}_{\delta,h}}\Big(|h\partial_te^{\Phi}\psi|^2+t|e^{\Phi/h}\psi|^2\Big)\dd s\dd t\\
+\int_{\mathcal{W}_{\delta,h}}(-\lambda+x_{\min}+ \tilde k s^2-|\nabla\Phi|^2)|e^{\Phi/h}\psi|^2 \;m\dd s\dd t\leq 0\,.
\end{multline*}
Then,  with a possibly larger constant $C$ we have that
\begin{multline*}
(1-Ch^{2/3-\eta})\int_{\mathcal{W}_{\delta,h}}\Big(|h\partial_te^{\Phi}\psi|^2+t|e^{\Phi/h}\psi|^2-h^{\frac 23}z_1|e^{\Phi/h}\psi|^2\Big)\dd s\dd t\\
+\int_{\mathcal{W}_{\delta,h}}(-\lambda+x_{\min}+h^{\frac 23}z_1+\tilde k s^2-|\nabla\Phi|^2-Ch^{\frac 43- \eta})|e^{\Phi/h}\psi|^2 \;m\dd s\dd t\leq 0\,.
\end{multline*}
By using the min-max principle and the Dirichlet bracketing (only with respect to $t$, $s$ being fixed), we have
\[\int_{\mathcal{W}_{\delta,h}}\Big(|h\partial_te^{\Phi}\psi|^2+t|e^{\Phi/h}\psi|^2-h^{\frac 23}z_1|e^{\Phi/h}\psi|^2\Big)\dd s\dd t\geq 0\,.\]
Therefore,
\[\int_{\mathcal{W}_{\delta,h}}\Big((-\lambda+x_{\min}+h^{\frac 23}z_1+\tilde k s^2-|\nabla\Phi|^2-Ch^{\frac 43-{\eta}})|e^{\Phi/h}\psi|^2 \Big)m\dd s\dd t\leq 0\,,\]
so that, using the assumption on the location of $\lambda$, we obtain
\[\int_{\mathcal{W}_{\delta,h}}(-{M}h+{\tilde k}s^2-|\nabla\Phi|^2-Ch^{\frac 43-{\eta}})|e^{\Phi/h}\psi|^2 \Big)m\dd s\dd t\leq 0\,.\]
 Now if $\eta\in\left(0,\frac 13\right)$ is kept fixed and $h$ is small enough, then:
\[\int_{\mathcal{W}_{\delta,h}}(\tilde k s^2-|\nabla\Phi|^2-{ 2Mh})|e^{\Phi/h}\psi|^2 m\dd s\dd t\leq 0\,,\]
and 
\[\int_{\mathcal{W}_{\delta,h}}\Big ((\tilde k -\varepsilon^2)s^2- 2Mh\Big )|e^{\Phi/h}\psi|^2 \dd s\dd t\leq 0\,.\]
For $\varepsilon$ small enough, the conclusion follows as in the proof of Proposition \ref{thm.Agmon}.
\end{proof}

This shows that the eigenfunctions of \enquote{low energy} are localized near $A_0$ at a scale $h^{1/2}$ in the $s$ direction, and at a scale $h^{2/3}$ in the $t$ direction.

\section{Proof of the main theorem}
In view of Equation \eqref{eeq1} and Proposition \ref{prop.exp-approx} our main result  Theorem~\ref{thm.main} is a direct consequence of the next proposition that provides the asymptotic behavior of the low-lying eigenvalues, $\mu_n(h)$, of the operator   $\widetilde{\mathscr{M}}_h$ defined in the previous section. 
\begin{proposition}\label{prop-mu}
	Let $n\in \{1,2,\dots\}$. Then, as $h\to 0$ we have
	\[\mu_n(h)=x_{\min}+z_{1}h^{\frac 23}+h(2n-1)\sqrt{\frac{\kappa_0}{2}}+\mathscr{O}(h^{ \frac 43})\,.\]
	\end{proposition}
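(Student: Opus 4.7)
The plan is to prove matching upper and lower bounds for $\mu_n(h)$ by comparing $\widetilde{\mathscr{M}}_h$ with a decoupled model operator whose spectrum is known by separation of variables.

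\textbf{The model operator.} I first introduce the comparison operator
\[
\mathscr{M}_h^{\rm mod} = -h^2\partial_s^2 - h^2\partial_t^2 + x_{\min} + \tfrac{\kappa_0}{2}s^2 + t
\]
on $\mathbb{R}\times\mathbb{R}_+$ with Dirichlet conditions at $t=0$. By separation of variables its eigenvalues are $x_{\min}+z_k h^{2/3}+(2n-1)h\sqrt{\kappa_0/2}$ for $n,k\ge 1$, the branch $k=1$ matching the target expansion. The harmonic-oscillator eigenfunctions in $s$ live at the scale $h^{1/2}$ (Hermite functions) and the Airy-type ground state in $t$ at the scale $h^{2/3}$, consistent with the localization provided by Propositions \ref{prop.Agmont} and \ref{prop.Agmons}.

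\textbf{Upper bound by quasi-modes.} For $n=1,\dots,N$ I would build trial functions $\psi_n(s,t)=\chi(s,t)\,v_n(s)\,w(t)$, where $v_n$ is the rescaled $n$-th Hermite function for $-h^2\partial_s^2+\tfrac{\kappa_0}{2}s^2$, $w$ is the ground state of $-h^2\partial_t^2+t$ on $\mathbb{R}_+$, and $\chi$ is a smooth cut-off to $\mathcal{W}_{\delta,h}$. Evaluating the quadratic form \eqref{qf-m-tilde} I would expand $m^{-2}=1+\mathscr{O}(t)$, $m=1+\mathscr{O}(t)$ and $\tau_1(s,t)=x_{\min}+\tfrac{\kappa_0}{2}s^2+t+\mathscr{O}(|s|^3+ts^2)$; the leading pieces give the claimed asymptotic, while the Gaussian decay of $v_n$ in $s$ and the exponential decay of $w$ in $t$ turn each correction into an $\mathscr{O}(h^{4/3})$ contribution (the worst being $|s|^3\sim h^{3/2}$, which is acceptable since $3/2>4/3$). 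Orthogonality modulo exponentially small errors, together with the min–max principle applied to the $n$-dimensional span, then yields
\[
\mu_n(h)\le x_{\min}+z_1h^{2/3}+(2n-1)h\sqrt{\kappa_0/2}+\mathscr{O}(h^{4/3}).
\]

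\textbf{Lower bound by form comparison.} For the matching lower bound I would exploit the decay estimates on eigenfunctions with eigenvalues $\lambda\le x_{\min}+z_1h^{2/3}+Mh$, which give $\int t^k|\psi|^2\lesssim h^{2k/3}\|\psi\|^2$ and $\int s^{2k}|\psi|^2\lesssim h^k\|\psi\|^2$. On the effective support $m=1+\mathscr{O}(h^{2/3})$ and $m^{-2}=1+\mathscr{O}(h^{2/3})$, while the kinetic energies contribute $\int|h\partial_s\psi|^2=\mathscr{O}(h)$ and $\int|h\partial_t\psi|^2=\mathscr{O}(h^{2/3})$. Combining these with the refined bounds $\int t\,|h\partial_t\psi|^2\lesssim h^{4/3}\|\psi\|^2$ and similar estimates for the $s$-derivative, I can compare the quadratic form of $\widetilde{\mathscr{M}}_h$ term by term with that of the model operator:
\[
\langle\widetilde{\mathscr{M}}_h\psi,\psi\rangle \;\ge\; \langle \mathscr{M}_h^{\rm mod}\psi,\psi\rangle - Ch^{4/3}\|\psi\|^2 .
\]
A spectral subspace formed by the $n$-th eigenfunctions of $\widetilde{\mathscr{M}}_h$ then feeds into the min–max for $\mathscr{M}_h^{\rm mod}$, producing $\mu_n(h)\ge x_{\min}+z_1h^{2/3}+(2n-1)h\sqrt{\kappa_0/2}-Ch^{4/3}$.

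\textbf{Main obstacle.} The hardest point is not the quasi-mode construction but keeping all the metric-induced cross-terms (from $m^{-2}$ in the $s$-kinetic term and from the measure $m\,ds\,dt$ in the $t$-kinetic term) within the $\mathscr{O}(h^{4/3})$ remainder in the lower bound. This requires more than the naïve bound $t\le h^{2/3-\eta}$ from the truncation; one must genuinely use the Agmon decay estimates of Propositions \ref{prop.Agmont} and \ref{prop.Agmons} to trade factors of $t$ and $s$ against small powers of $h$ at the appropriate places, and also to justify that the cubic remainder $|s|^3$ in $\tau_1$ only contributes $\mathscr{O}(h^{3/2})=o(h^{4/3})$. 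Once this bookkeeping is in place, the statement follows from the min–max principle applied to both $\widetilde{\mathscr{M}}_h$ and $\mathscr{M}_h^{\rm mod}$.
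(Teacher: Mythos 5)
Your proposal follows essentially the same route as the paper: the same decoupled model operator (Airy in $t$ tensor harmonic oscillator in $s$), the same Hermite--Airy quasi-modes with a cutoff for the upper bound, and the same lower bound via comparing the quadratic form on the span of the first $N$ eigenfunctions with that of the model operator, controlling metric remainders through the Agmon estimates of Propositions \ref{prop.Agmont} and \ref{prop.Agmons}. The one detail the paper handles slightly more crisply than your sketch is the change of measure from $m\,\dd s\,\dd t$ to $\dd s\,\dd t$ in the lower bound, where it simply uses $m\le 1$ so that the flat norm dominates the weighted one, but your ``$m=1+\mathscr{O}(h^{2/3})$ on the effective support'' accomplishes the same thing.
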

In the remainder of this section  we show the above proposition by obtaining suitable upper and lower bounds. 
\subsection{Upper bound}
To get the upper bound in Theorem \ref{thm.main}, it is sufficient to use convenient test functions in the domain of $\widetilde{\mathscr{M}}_h$ and apply the min-max principle.

Consider a smooth function $\chi$ with compact support equal to $1$ near $(0,0)$ on a scale of order $h^{1/2-\theta}$ in the $s$ direction, and on a scale of order $h^{2/3-\theta}$ in the $t$ direction, for some $\theta\in (0,\eta)$. Let us introduce the following family of test functions
\[\varphi_{n,h}(s,t)=\chi(s,t)f_{n,h}(s) a_{h}(t)\,,\]
where 
\begin{enumerate}[\rm i.]
\item $f_{n,h}(s)=h^{-\frac{1}{4} }f_n(h^{-\frac{1}{2}}s)$ and $(f_n)_{n\in\mathbb{N}\setminus\{0\}}$ is an $L^2(\R)$-normalized family of eigenfunctions of the harmonic oscillator $-{h^2}\partial^2_s+\frac{\kappa_0}{2}s^2$ (rescaled Hermite functions), 
\item $a_{h}(t)=  h^{-\frac{1}{3}} \mathrm{Ai}(h^{-\frac{2}{3}}t-z_1)$ where $\mathrm{Ai}$ is the $L^2(\R^+)$ normalized Airy function and $-z_1<0$ is its first zero. In particular, we have that $a_{h}(t)$ satisfies 
$$(-h^2\partial_t^2+t)a_h(t)= z_1  h^{2/3}  a_h(t) $$ for $t> 0$ with Dirichlet boundary conditions.
\end{enumerate}
An explicit  computation shows that  for any $n\in \mathbb{N}$ there exist constants $\varepsilon, c>0$ such that
	\begin{equation}\label{eqn1}
	\int_{\R}e^{\varepsilon s^{2}/h}|f_{n,h}(s)|^2\dd s\leq c\,,\quad 
\int_{\R^+} e^{\varepsilon t^{\frac 32}/h}|a_{h}(t)|^2\dd t\leq c\,.
\end{equation}
Using this we immediately see that we find $c_1,c_2>0$ such that
\begin{equation}\label{es-1}
1\ge  \|\varphi_{n,h}\|^2\ge 1-c_1\int t \ab{\varphi_{n,h}(s,t)}^2 \dd s\dd t
\ge 1-c_2 h^{2/3}.
\end{equation} 

We are interested in estimating the matrix elements 
$\langle\varphi_{n',h}, (\widetilde{\mathscr{M}}_h-x_{\rm min})\varphi_{n,h}\rangle.$  First, we notice that the support of $\chi-1$ and the supports of the derivatives of $\chi$ are located in a region where either $h^{-1/2}|s|\geq C h^{-\theta}$ or $h^{-2/3} t\geq C h^{-\theta} $, thus all the integrals entering the matrix element which contain such derivatives will be of order $h^{-\infty}$ due to the exponential localization of the $f_n$'s and of ${\rm Ai}$. Second, the operators $h\partial_s$ and $h\partial_t$ acting on $f_{n,h}$'s and $a_h$ respectively will generate a factor of at most $h^{1/3}$, and each integral contains two such factors; thus each term in the scalar product has an order of magnitude of at most $h^{2/3}$. Third, if we replace the function $m$ in \eqref{es-2} by $1$, the error contains an extra  factor $t$ which due to the decay of $a_h$ may be replaced by $h^{2/3}$. Together with the a-priori decay of $h^{2/3}$ coming from the derivatives, this error term will grow at most  like $h^{4/3}$.  

Moreover, replacing $\tau_1(s,t)-x_{\rm min}$ with $\frac{\kappa_0}{2}s^2+t$ will produce an error like $h^{3/2}$. Thus we may write:
\begin{align}\label{july1}
\langle\varphi_{n',h}, (\widetilde{\mathscr{M}}_h-x_{\rm min})\varphi_{n,h}\rangle=&h^2\int_\R\left ( \partial_s  f_{n,h} \overline{\partial_s  f_{n',h}}+\frac{\kappa_0}{2}s^2f_{n,h} \overline{  f_{n',h}}\right)\dd s\nonumber \\
&+ \delta_{n,n'}\int_{\R_+}\left ( \vert \partial_t  a_{h}\vert ^2 +t\vert a_h\vert^2\right)\dd t +\mathscr{O}(h^{4/3})\nonumber \\
=&\delta_{n,n'}\left  ( (2n-1)h\sqrt{\frac{\kappa_0}{2}} +z_1 h^{2/3}\right )+\mathscr{O}(h^{4/3}).
\end{align}
Using similar arguments, the Gram-Schmidt matrix elements $\langle \varphi_{n',h},\varphi_{n,h}\rangle $ will equal $\delta_{n,n'}+\mathscr{O}(h^{2/3})$. Thus if $N$ is fixed and $h$ is small enough, the subspace \[\underset{1\leq j\leq N}{\text{span}}\varphi_{j,h}\] 
will have dimension $N$ and we may find an orthonormal basis $\{\psi_{n,h}\}_{n=1}^N$ such that 
$$\psi_{n,h}=\sum_{j=1}^N c_j\varphi_{j,h},\quad c_j=\delta_{n,j}+\mathscr{O}(h^{2/3})\,.$$

Thus the matrix elements $\langle\psi_{n',h}, (\widetilde{\mathscr{M}}_h-x_{\rm min})\psi_{n,h}\rangle$
will obey the same estimate as in \eqref{july1}, which via the min-max principle imply 
\begin{align*}
 \mu_n(h)\le x_{\min}+z_{1}h^{\frac 23}+h(2n-1)\sqrt{\frac{\kappa_0}{2}}+\mathscr{O}(h^{ \frac 43}),\quad 1\leq n\leq N.
\end{align*}
\subsection{Lower bound}
Let $N\geq 1$ and consider a family of eigenfunctions $(\psi_{j,h})_{1\leq j\leq N}$ associated with the eigenvalues $(\mu_j(h))_{1\leq j\leq N}$. 
We let
\[\mathscr{E}_N(h)=\underset{1\leq j\leq N}{\mathrm{span}}\, \psi_{j,h}\,\subset L^2(\mathcal{W}_{\delta,h}, m\dd s\dd t)\,.\]
Note that the decay estimates of Propositions \ref{prop.Agmont} and \ref{prop.Agmons} can be extended to $\psi\in \mathscr{E}_N(h)$.

 Let us choose any $\psi\in \mathscr{E}_N(h)$ with norm one. Because $m(s,t)\leq 1$, we have the important inequality
\begin{equation}\label{hc-8}
   1= \int_{\mathcal{W}_{\delta,h}}|\psi|^2 m\dd s\dd t\leq \Vert \psi\Vert^2_{L^2(\mathcal{W}_{\delta,h}; \dd s\dd t)}\,.
\end{equation}
We also have
\begin{equation}\label{hc-9}
\mu_N(h)\geq \langle  \widetilde{\mathscr{M}}_h\psi,\psi\rangle=\int_{\mathcal{W}_{\delta,h}}\left( m|h\partial_t\psi|^2+m^{-1}|h\partial_s\psi|^2+m {\tau}_1(s,t)|\psi|^2\right)\dd s\dd t\,.
\end{equation}
We recall that $m(s,t)=1-\kappa(s)t$ so that,  using \eqref{eq.Agmont2} in the second inequality below,
\[\begin{split}
\langle { \widetilde{\mathscr{M}}_h}\psi,\psi\rangle&\geq\int_{\mathcal{W}_{\delta,h}}\left( |h\partial_t\psi|^2+|h\partial_s\psi|^2+m {\tau}_1(s,t)|\psi|^2\right)\dd s\dd t-C\int_{\mathcal{W}_{\delta,h}}t|h\nabla_{s,t}\psi|^2\dd s\dd t\\
&\geq\int_{\mathcal{W}_{\delta,h}}\left( |h\partial_t\psi|^2+|h\partial_s\psi|^2+m {\tau}_1(s,t)|\psi|^2\right)\dd s\dd t-Ch^{4/3}\,,\\
&=x_{\rm min}+\int_{\mathcal{W}_{\delta,h}}\left( |h\partial_t\psi|^2+|h\partial_s\psi|^2+m {(\tau_1(s,t)-x_{\rm min})}|\psi|^2\right)\dd s\dd t-Ch^{4/3}\\
&{\ge x_{\min}+\int_{\mathcal{W}_{\delta,h}}\left( |h\partial_t\psi|^2+|h\partial_s\psi|^2+\left(t+\kappa_0\frac{s^2}{2}\right)|\psi|^2\right)\dd s\dd t-Ch^{4/3}}\\
&\qquad
{-\tilde{C} \int_{\mathcal{W}_{\delta,h}}(|s^2t|+|t^2|)|\psi|^2\dd s\dd t },
\end{split}\]
the last integral can be estimated to be of  order $h^{4/3}$ as well using the exponential decay in the $s$ and $t$ variables (Propositions \ref{prop.Agmont} and \ref{prop.Agmons}). Then, there is a $C>0$ such that
\begin{multline*}
\langle  \widetilde{\mathscr{M}}_h\psi,\psi\rangle\geq x_{\min}+\int_{\mathcal{W}_{\delta,h}}\left( |h\partial_t\psi|^2+|h\partial_s\psi|^2+\left(t+\kappa_0\frac{s^2}{2}\right)|\psi|^2\right)\dd s\dd t
 -C{h^{4/3}}\,.
\end{multline*}
 Now using the inequalities in \eqref{hc-9} and \eqref{hc-8} we have:
\begin{align}\label{hc-100}
\int_{\mathcal{W}_{\delta,h}}\left( |h\partial_t\psi|^2+|h\partial_s\psi|^2+\left(t+\kappa_0\frac{s^2}{2}\right)|\psi|^2\right)&\dd s\dd t \leq \mu_N(h)-x_{\min}+{Ch^{ 4/3}}
\\
&\leq (\mu_N(h)-x_{\min}+{Ch^{4/3}})\|\psi\|^2_{L^2(\dd s \dd t)}\,.\nonumber 
\end{align}
On the left hand side of the above identity we recognize the quadratic form associated to the operator
 $$\mathscr{A}_h\otimes \mathbf{1}+ \mathbf{1}\otimes \mathscr{H}_h\,,$$ 
  where  $\mathscr{A}_h=-h^2\partial^2_t+t$ on $L^2((0,+\infty), \dd t)$ with Dirichlet boundary conditions and  $\mathscr{H}_h=-h^2\partial^2_s+\kappa_0\frac{s^2}{2}$ on 
 $L^2((-\infty,\infty), \dd s)$.  The spectrum of $\mathscr{A}_h$ and $ \mathscr{H}_h$
 is given by 
  \begin{equation*}
 \mathrm{sp}(\mathscr{A}_h)=\Big\{z_{n}h^{2/3}\,,\, n\ge 1\Big\}\,,\qquad
  \mathrm{sp}(\mathscr{H}_h)=\Big\{(2n-1)h\sqrt{\frac{\kappa_0}{2}}\,,\,
   n\ge 1\Big\}\,.
 \end{equation*}
 Thus, for $h$ small enough, the $N$-th eigenvalue of $\mathscr{A}_h\otimes \mathbf{1}+ \mathbf{1}\otimes \mathscr{H}_h$, denoted by $\nu_N(h)$, is given by 
 \begin{equation}\label{eq.lnAH}
 \nu_N(h)=z_{1}h^{2/3}+(2N-1)h\sqrt{\frac{\kappa_0}{2}}\,.
 \end{equation}
 Notice that the  set $\mathscr{E}_N(h)$ is contained in the form domain of $\mathscr{A}_h\otimes \mathbf{1}+ \mathbf{1}\otimes \mathscr{H}_h$ and, seen as a subset of $L^2(\mathcal{W}_{\delta,h};\dd s\dd t)$, still has the dimension $N$ for  $h$ small enough. This is because the $\psi_{j,h}$'s are almost orthogonal in the \enquote{flat} space, up to an error of order $h^{2/3}$. Thus, by the min-max 
 principle, we have 
 \begin{align}\label{es1}
  \nu_N(h)\le \sup_{\psi\in \mathscr{E}_N(h)} \frac{\left \langle \Big (\mathscr{A}_h\otimes \mathbf{1}+ \mathbf{1}\otimes \mathscr{H}_h \Big )\psi, \psi\right \rangle_{L^2(dsdt)}}{\|\psi\|^2_{L^2(\dd s \dd t)}}
  \le  (\mu_N(h)-x_{\min}+{Ch^{4/3}})\,,
 \end{align} 
 where in the last inequality we used  \eqref{hc-100}.  This implies the desired lower bound and concludes the proof of Proposition~\ref{prop-mu}.
 
\subsection*{Acknowledgment}
D.K. has been supported by the EXPRO grant No. 20-17749X
of the Czech Science Foundation (GACR). D.K. is also grateful to the Aalborg University for supporting his stay in 2019. E.S. has been partially funded by Fondecyt (Chile) project  \# 118--0355 and thanks Esteban Ramos Moore for stimulating discussions. H.C., E.S, and N.R. are deeply grateful to the Mittag-Leffler Institute where this collaboration was stimulated during the thematic semester \enquote{Spectral Methods in Mathematical Physics} in 2019. E.S. and N.R. are also grateful to the CIRM where the ideas of this paper were discussed (\enquote{Research in Pairs} session in 2019).

\bibliographystyle{abbrv}
\bibliography{stark}

\end{document}